\newcommand{\bs}{\boldsymbol}
\newcommand{\m}{{\rm m}}
\newcommand{\M}{{\rm M}}
\newcommand{\rmd}{{\rm d}}
\newcommand{\fst}{{\rm 1st}}
\newcommand{\snd}{{\rm 2nd}}
\newtheorem{theorem}{Theorem}
\newtheorem{lemma}{Lemma}
\newtheorem{definition}{Definition}
\title[Time-Varyingness in Auction Breaks Revenue Equivalence]{Time-Varyingness in Auction Breaks Revenue Equivalence}
\author{Yuma Fujimoto}
\affiliation{
  \institution{CyberAgent}
  \city{Tokyo}
  \country{Japan}}
\email{fujimoto.yuma1991@gmail.com}
\author{Kaito Ariu}
\affiliation{
  \institution{CyberAgent}
  \city{Tokyo}
  \country{Japan}}
\email{kaito_ariu@cyberagent.co.jp}
\author{Kenshi Abe}
\affiliation{
  \institution{CyberAgent}
  \city{Tokyo}
  \country{Japan}}
\email{abe_kenshi@cyberagent.co.jp}
\begin{abstract}
Auction is applied for trade with various mechanisms. A simple but practical question is which mechanism, typically first-price or second-price auctions, is preferred from the perspective of bidders or sellers. A celebrated answer is revenue equivalence, where each bidder's equilibrium payoff is proven to be independent of auction mechanisms (and a seller's revenue, too). In reality, however, auction environments like the value distribution of items would vary over time, and such equilibrium bidding cannot always be achieved. Indeed, bidders must continue to track their equilibrium bidding by learning in first-price auctions, but they can keep their equilibrium bidding in second-price auctions. This study discusses whether and how revenue equivalence is violated in the long run by comparing the time series of non-equilibrium bidding in first-price auctions with those of equilibrium bidding in second-price auctions. We characterize the value distribution by two parameters: its basis value, which means the lowest price to bid, and its value interval, which means the width of possible values. Surprisingly, our theorems and experiments find that revenue equivalence is broken by the correlation between the basis value and the value interval, uncovering a novel phenomenon that could occur in the real world.
\end{abstract}
\keywords{Auction, Revenue Equivalence, Learning in Games, Time-Varying}
\newcommand{\BibTeX}{\rm B\kern-.05em{\sc i\kern-.025em b}\kern-.08em\TeX}
\begin{document}

%%% The following commands remove the headers in your paper. For final 
%%% papers, these will be inserted during the pagination process.

\pagestyle{fancy}
\fancyhead{}

%%% The next command prints the information defined in the preamble.

\maketitle 

%%%%%%%%%%%%%%%%%%%%%%%%%%%%%%%%%%%%%%%%%%%%%%%%%%%%%%%%%%%%%%%%%%%%%%%%

\section{Introduction}
Auction is a class of buying-selling systems including various mechanisms~\cite{vickrey1961counterspeculation}. Multiple buyers bid their prices for an item. The bidder who bids the highest price wins the item, but its payment depends on the auction mechanism. In advertising auctions, first-price and second-price mechanisms are typically adopted. The winner pays the highest price, i.e., the price bid by itself, in first-price auctions. On the other hand, it pays only the next highest price in second-price auctions. A captivating question is which mechanism is preferred from the perspectives of the bidders or the seller.

The revenue equivalence theorem is one powerful but counterintuitive theoretical result in auction theory~\cite{vickrey1961counterspeculation, riley1981optimal, myerson1981optimal}. It proves that in equilibrium, both the revenue of the seller and the payoff of bidders are independent of auction mechanisms. Revenue equivalence holds when the value of an item follows a symmetric, independent, and private distribution among bidders. It has been reported that revenue equivalence is broken down, for example, when the value distribution is asymmetric~\cite{griesmer1967toward, plum1992characterization, maskin2000asymmetric, cheng2006ranking} or interdependent~\cite{milgrom1982theory}. Discussing this revenue equivalence is crucial for designing advantageous auction mechanisms for sellers or bidders. As a related phenomenon, most supply-side platforms (sellers) have switched their mechanism from second-price to first-price~\cite{paes2020competitive, despotakis2021first, alcobendas2022adjustment, goke2022bidders}. Revenue equivalence is an important issue connected to real-world auctions.
% On the other hand, it has been reported that real-world ad auction data observe a phenomenon like revenue equivalence, where the revenue in the first-price auctions after the switch gradually falls to the level immediately before the switch~\cite{goke2022bidders}. 

% Figure 01
\begin{figure*}[tb]
    \centering
    \includegraphics[width=0.8\hsize]{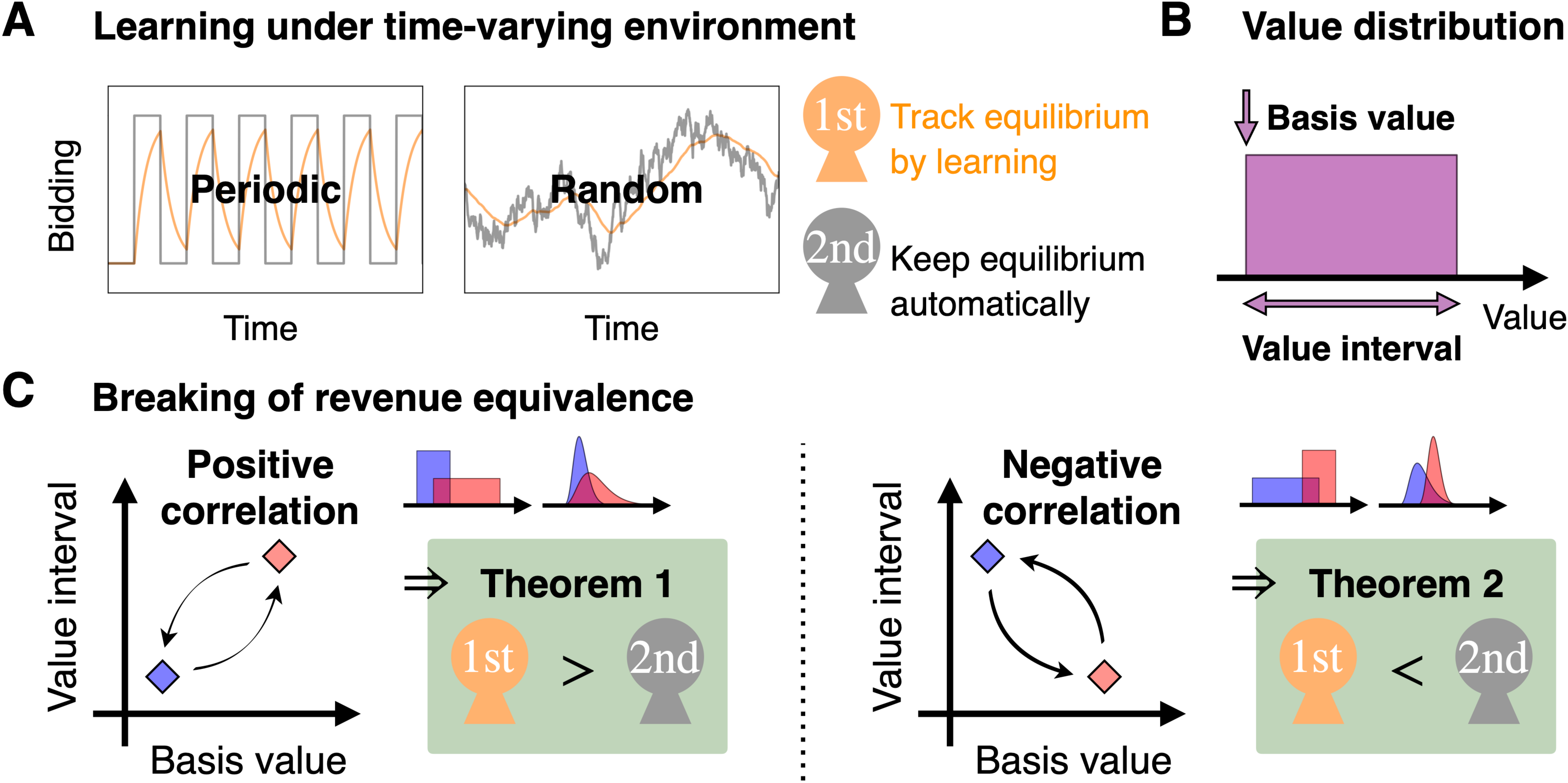}
    \caption{Overview of learning in time-varying auctions. (A). This study considers time-varying environments, such as periodic and random ones. How bidding changes over time under such time-varying environments is plotted for first-price (orange) and second-price (gray) auctions. In second-price auctions, truthful bidders automatically keep their optimal bidding regardless of the environmental change. In first-price auctions, however, bidders should track the optimal bidding by learning. (B). A value distribution (e.g., uniform distribution) is characterized by its basis value and its value interval. (C). This study demonstrates that the correlation between the basis value and the value interval determines how revenue equivalence is broken. The positive correlation results in bidders preferring first-price auctions (Thm.~\ref{thm_inequivalence_1st}), while the negative correlation results in bidders preferring second-price auctions (Thm.~\ref{thm_inequivalence_2nd}). The upper-right distributions are examples of uniform and log-normal distributions that provide positive/negative correlation.}
    \label{F01}
\end{figure*}

Although discussing whether the revenue equivalence holds or not is based on the equilibrium analysis, real-world auctions typically do not reach such an equilibrium. One primary factor is a time-varying environment, which continues to change the equilibrium over time (see Fig.~\ref{F01}-A). Indeed, daily and weekly cycles equipped with randomness, possibly caused by the rhythm of human life, are observed in bidding from empirical auction data~\cite{yuan2013real, yuan2014empirical, goke2022bidders}. The resistance to the time-varying equilibrium differs between the first-price and second-price auctions. Since bidders should bid the equilibrium payment in a first-price auction, they can only track the equilibrium by learning at best (see the orange parts in Fig.~\ref{F01}-A). On the other hand, in second-price auctions, they only have to bid the value they perceive (called truthful bidding), leading to the equilibrium payment. Hence, they can keep the equilibrium automatically (see the gray parts). Thus, when auction environments vary, one mechanism may be favorable to either bidders or sellers than the other. This study addresses the following question;
\begin{center}
\textit{Does dynamic value, even if symmetric, independent, and private, break revenue equivalence?}
\end{center}
If so, the further question arises.
\begin{center}
\textit{Which of the first-price and second-price auctions \\ are preferred by bidders (or sellers)?}
\end{center}
Answering these questions may explain which mechanism should be implemented in reality.

To these questions, the main contributions of this study are as follows.
\begin{itemize}
\item \textbf{We characterize the long-run behavior where bidders track time-varying equilibria.} We extend the symmetric Bayesian Nash equilibrium for a given value distribution and formulate the gradient descent-ascent dynamics where bidders learn unknown true parameters for a time-varying value distribution. To test revenue equivalence in a time-varying environment, we compare the long-run payoffs between first-price and second-price auctions under time-varying value distributions.
\item \textbf{We provide theoretical conditions under which revenue equivalence is broken for uniform value distributions.} A uniform distribution is characterized by two parameters, its basis value and its value interval (see Fig.~\ref{F01}-B), and the correlation of these parameters over time determines the revenue inequivalence (see Fig.~\ref{F01}-C). When these parameters correlate positively (Thm.~\ref{thm_inequivalence_1st}), bidders receive a greater payoff in first-price auctions. The inverse is also true (Thm.~\ref{thm_inequivalence_2nd}).
\item \textbf{We extract insights from our theory and experimentally demonstrate the revenue inequivalence.} Our experiments show that the revenue equivalence is broken in both directions in periodic and random environments, respectively. We also capture an intuition that for log-normal distributions, the correlation between the mean and the variance breaks revenue equivalence, and our further experiments verify this intuition.
\end{itemize}

\subsection{Continuous-Time Analysis}
This study adopts a continuous-time analysis approach for time-varying auctions. Auctions exist within the framework of game theory, where one's optimal strategy depends on the others' strategies, and thus learning in games frequently faces complex behaviors, such as non-convergence (cycling). The continuous-time analysis has succeeded in capturing these complex behaviors~\cite{sato2002chaos, bloembergen2015evolutionary, mertikopoulos2016learning, mertikopoulos2018cycles, bailey2019multi, fujimoto2025global, ota2025hamiltonian}. When the game environment varies over time, more complex behaviors occur and require deep understanding by the continuous-time analysis~\cite{fiez2021online, fujimoto2025synchronization}. Non-convergence behaviors are also seen in learning in an auction~\cite{kolumbus2022auctions, deng2022nash}, and the continuous-time analysis is often taken~\cite{paes2024complex}. In this study, the continuous-time analysis is helpful to capture similar non-convergence (equilibrium tracking) behavior and well approximates the behavior of learning with a small learning rate.

\section{Preliminary: Static Environment}
First, we introduce the classical setting, where auction environments, such as value distribution, are static.

\subsection{Setting}
Auction theory considers $n\in\mathbb{N}$ bidders, labeled as $i\in\{1,\cdots, n\}$, in general. An item is offered to the bidders every time, and its value is $v_{i}>0$ for bidder $i$. Each bidder independently determines the bidding $b_{i}$ for the item. The bidder who bids the highest price receives the item. However, its payment depends on the type of auction. This study considers two representative types: first-price and second-price auctions. In the former, the payment is the first price ($b^{\fst}=\max_{j}b_{j}=b_{i}$), i.e., the bidding of the winner. In the latter, the payment is the second price ($b^{\snd}=\max_{j\neq i}b_{j}$), i.e., the maximum bidding of those other than the winner. Thus, the payoffs of first-price and second-price auctions are
\begin{align}
    u_i^{\fst}(b_i|v_i)&=(v_i-b^{\fst})\mathds{1}[b_i=\max_{j}b_j],
    \label{payoff_1st}\\
    u_i^{\snd}(b_i|v_i)&=(v_i-b^{\snd})\mathds{1}[b_i=\max_{j}b_j].
    \label{payoff_2nd}
\end{align}
In auctions, the strategy of each bidder is bidding $b_i$ to the observed value $v_i$, i.e., the function of $b_i(v_i)$. This study considers that $v_i$ follows the same distribution $f(v_{i};\bs{\theta}^{*})$ on $[0,\infty)$ independently for each $i$ (called the setting of ``symmetric, independent, and private'' value~\cite{krishna2009auction}). Here, we assume that $f(v_{i};\bs{\theta}^{*})$ is characterized by finite parameters, denoted as $\bs{\theta}^{*}\in\mathbb{R}^{d}$ and that $F(v_{i};\bs{\theta}^{*})$, the cumulative distribution of $f(v_{i};\bs{\theta}^{*})$, is continuous for $v_{i}$ and $\bs{\theta}^{*}$.

\subsection{Equilibrium Analysis}
The equilibrium in such auctions has already been studied. The symmetric Bayesian Nash equilibrium in a first-price auction is $b_{i}=b^{*}$ for all $i$ such that
\begin{align}
    b^{*}(v)=v-\frac{1}{F(v;\bs{\theta}^{*})^{n-1}}\int_{0}^{v}F(z;\bs{\theta}^{*})^{n-1}\rmd z.
    \label{SBNE}
\end{align}
By the definition of $b^{\fst}$, this $b^{*}(v)$ is equal to the winner's payment. On the other hand, the equilibrium in a second-price auction is given by truthful bidding $b_{i}=b^{\rm TB}$ for all $i$ such that
\begin{align}
    b^{\rm TB}(v)=v,
    \label{TB}
\end{align}
meaning that each one bids the value it perceives. Interestingly, the revenue equivalence theorem (reviewed in Appendix~\ref{app_review}) shows that the winner's payment is equal between first-price and second-price auctions in its expectation.

\section{Time-Varying Value Distribution}
This study considers a situation where the value distribution varies over time. In detail, we assume that the parameters of the value distribution depend on time, i.e., $\bs{\theta}_{s}^{*}$ for $s\in\{0,1,\cdots\}$.

Now, imagine the behavior of bidders in first-price auctions. Since the equilibrium strategy in a first-price auction explicitly depends on the true parameters $\bs{\theta}_{s}^{*}$, it would be unrealistic to continue to know these true parameters. Instead, at best, bidders would guess that the parameters are $\bs{\theta}_{s}$ instead of the true one $\bs{\theta}_{s}^{*}$ and use the equilibrium bidding $b(v;\bs{\theta}_{s})$ based on $\bs{\theta}_{s}$, which is defined as
\begin{align}
    b(v;\bs{\theta}_{s})=v-\frac{1}{F(v;\bs{\theta}_{s})^{n-1}}\int_{0}^{v}F(z;\bs{\theta}_{s})^{n-1}\rmd z.
\end{align}
They would further learn $\bs{\theta}_{s}$ to track $\bs{\theta}_{s}^{*}$.

On the other hand, consider the behavior of bidders in second-price auctions. Recall that the equilibrium bidding, i.e., truthful bidding, is independent of the true parameters, and this truthful bidding is feasible even when the true parameters vary over time. Hence, bidders can continue to achieve the equilibrium payment $b(v;\bs{\theta}_{s}^{*})$.

In summary, the bidding behaviors of first-price and second-price auctions crucially differ in such time-varying value distributions as follows.
\begin{itemize}
    \item In a first-price auction, bidders pay $b(v;\bs{\theta}_{s})$ and learn $\bs{\theta}_{s}$ to track the true parameters $\bs{\theta}_{s}^{*}$.
    \item In a second-price auction, bidders can automatically keep the equilibrium payment $b(v;\bs{\theta}_{s}^{*})$.
\end{itemize}
To discuss how such different bidding behaviors break revenue equivalence, we propose a promising algorithm to learn $\bs{\theta}_{s}$ in first-price auctions and formulate the time-averaged payoffs both in first-price and second-price auctions.

\subsection{Learning Dynamics}
Let $w_{\bs{\theta}^{*}}(\bs{\theta}',\bs{\theta})$ denote the expected payoff of a focal bidder who uses the bidding $b(v;\bs{\theta}')$ while all the others use $b(v;\bs{\theta})$ under the true parameter $\bs{\theta}^{*}$. This expected payoff is described as
\begin{align}
    w_{\bs{\theta}^{*}}(\bs{\theta}',\bs{\theta})=\int_{0}^{\infty}(v-b(v;\bs{\theta}'))f(v;\bs{\theta}^{*})F(v';\bs{\theta}^{*})^{n-1}\rmd v.
\end{align}
Here, we defined $v'=v'(v,\bs{\theta},\bs{\theta}')$ such that $b(v';\bs{\theta})=b(v;\bs{\theta}')$, meaning that when the focal bidder observes the item value $v$ with parameter $\bs{\theta}'$, it bids as if it observes $v'$ with parameter $\bs{\theta}$.

Suppose that at time $s$, all the bidders estimate the parameter as $\bs{\theta}_{s}$ under the true parameter $\bs{\theta}^{*}_{s}$. We assume that each bidder observes the gradient of its expected payoff for its parameter (called the ``full-feedback'' setting), and each bidder can update its parameter by following the gradient descent-ascent as
\begin{align}
    \bs{\theta}_{s+1}=\bs{\theta}_{s}+\eta\left.\frac{\partial w_{\bs{\theta}^{*}_{s}}(\bs{\theta}',\bs{\theta}_{s})}{\partial\bs{\theta}'}\right|_{\bs{\theta}'=\bs{\theta}_{s}},
\end{align}
where $\eta\in\mathbb{R}$ is the learning rate.

For analysis, we introduce the continuum limit of the gradient descent-ascent dynamics for time $t\in[0,T]$. According to the relation of $t=s\eta$, we can introduce the continuous-time differentiation and obtain the continuum learning dynamics of
\begin{align}
    \dot{\bs{\theta}}(t)=\left.\frac{\partial w_{\bs{\theta}^{*}(t)}(\bs{\theta}',\bs{\theta}(t))}{\partial \bs{\theta}'}\right|_{\bs{\theta}'=\bs{\theta}(t)}.
    \label{dotx}
\end{align}
This well approximates the behavior for small $\eta$.

\paragraph{Remark on symmetry:} The learning dynamics (Eq.~\eqref{dotx}) assume that all bidders have the same parameters $\bs{\theta}(t)$. The symmetric but non-equilibrium setting is a natural extension of the symmetric Bayesian Nash equilibrium, which is assumed in the classical auction theory. Furthermore, when all bidders are under the same time-varying environment, their parameters are expected to gradually synchronize through learning. At least once these parameters synchronize, they continue to synchronize as long as they follow the dynamics. Furthermore, the dynamics are guaranteed to converge to the equilibrium for uniform distributions, as shown later. The final remark is that a similar idealization is adopted in the context of evolutionary biology as adaptive dynamics~\cite{hofbauer1990adaptive, dieckmann1996dynamical, nowak2004evolutionary}.

% Figure 02
\begin{figure*}[tb]
    \centering
    \includegraphics[width=0.8\hsize]{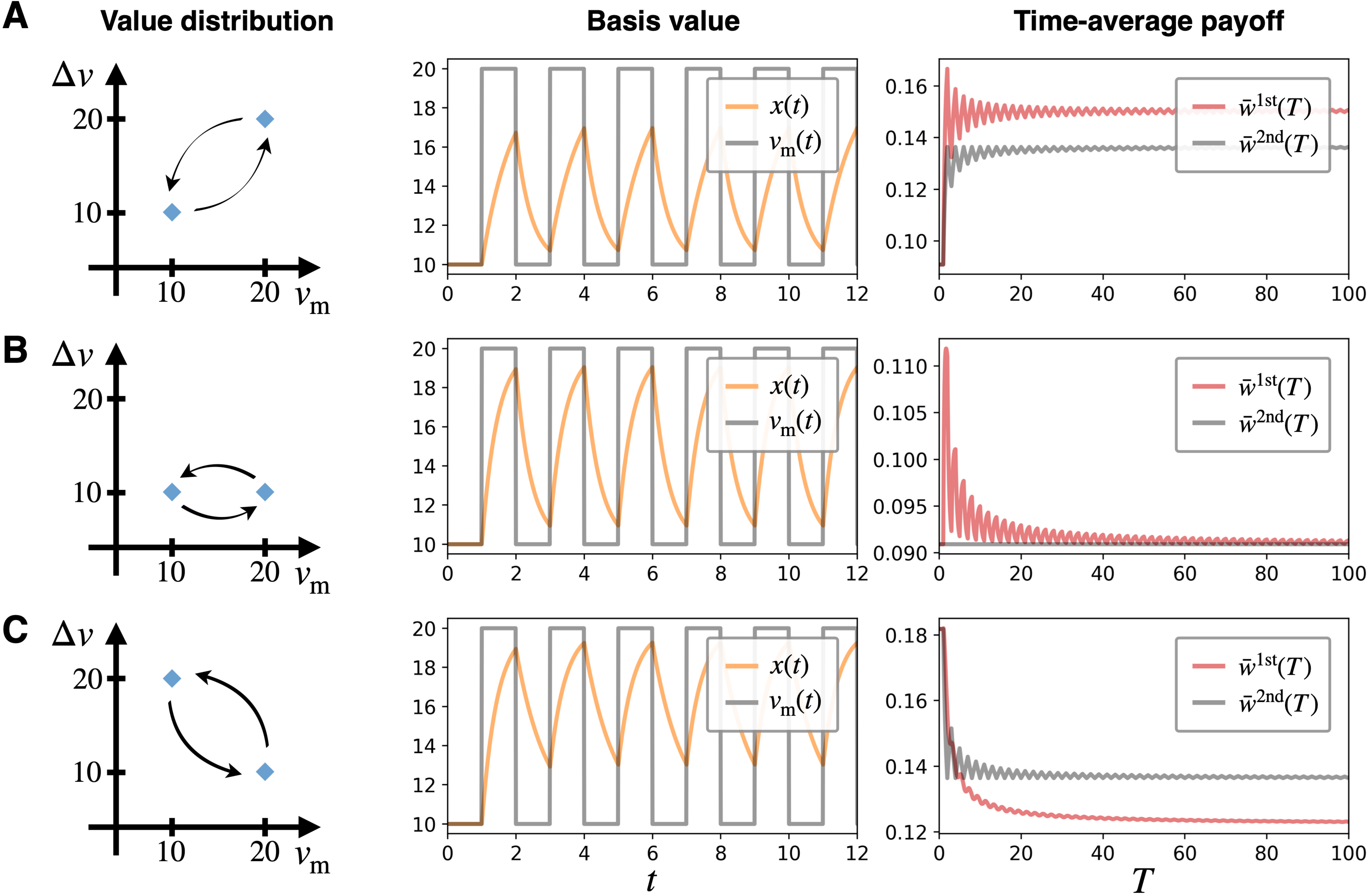}
    \caption{The experiments for periodic uniform distributions. The left panels show how the uniform distribution switches between two states. The center panels show the time series of the learned basis value ($x(t)$: orange) and the true basis value ($v_{\m}(t)$: gray). The right panels show the time series of the time-average payoffs in the first-price auction ($\bar{w}^{\fst}(T)$: red) and in the second-price auction ($\bar{w}^{\snd}(T)$: gray) within the range of $0\le t\le T$. In all the experiments, we set the population as $n=10$, the Runge-Kutta fourth-order method of Eq.~\eqref{dotx} with the step size of $10^{-3}$ and accelerated $2\times 10^{3}$. The case of A considers $\bs{\theta}^{*}(t)\in\{(10,20),(20,40)\}$, B does $\bs{\theta}^{*}(t)\in\{(10,20),(20,30)\}$, and C does $\bs{\theta}^{*}(t)\in\{(10,30),(20,30)\}$.}
    \label{F02}
\end{figure*}

% Figure 03
\begin{figure*}[tb]
    \centering
    \includegraphics[width=0.8\hsize]{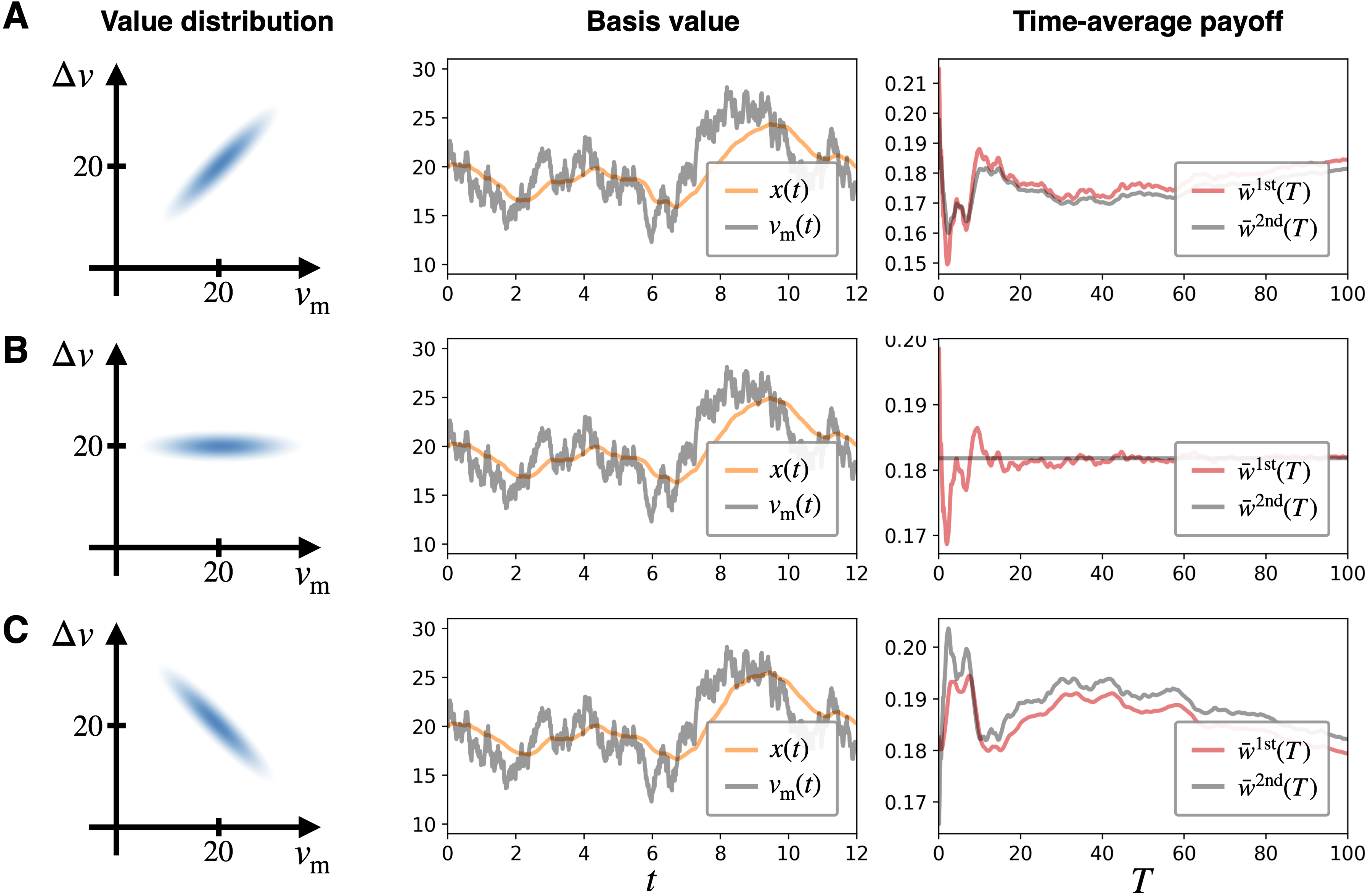}
    \caption{The experiments for random uniform distributions. The left panels show possible uniform distributions generated by Eqs.~\eqref{Langevin_m} and~\eqref{Langevin_M}. The meaning of the center and right panels is the same as Fig.~\ref{F02}. The methods and parameters for all the experiments, too. In all the cases of A-C, we set $(\bar{v}_{\m},\bar{v}_{\M})=(20,40)$. The case of A considers $(a_{\m},a_{\M})=(5,10)$ in Eqs.~\eqref{Langevin_m} and~\eqref{Langevin_M}, B does $(a_{\m},a_{\M})=(5,5)$, and C does $(a_{\m},a_{\M})=(5,0)$.}
    \label{F03}
\end{figure*}

\subsection{Long-Run Payoff}
This study evaluates the payoff of bidders in the long run. Each time $t$, the expected payoffs in first-price and second-price auctions are described as
\begin{align}
    w_{\bs{\theta}^{*}(t)}^{\fst}(\bs{\theta}(t))&=w_{\bs{\theta}^{*}(t)}(\bs{\theta}(t),\bs{\theta}(t)), \\
    w_{\bs{\theta}^{*}(t)}^{\snd}&=w_{\bs{\theta}^{*}(t)}^{\fst}(\bs{\theta}^{*}(t)).
\end{align}
Furthermore, the time-averaged payoffs for $0\le t\le T$ are
\begin{align}
    \bar{w}^{\fst}(T)&=\frac{1}{T}\int_{0}^{T}w_{\bs{\theta}^{*}(t)}^{\fst}(\bs{\theta}(t))\rmd t, \\
    \bar{w}^{\snd}(T)&=\frac{1}{T}\int_{0}^{T}w_{\bs{\theta}^{*}(t)}^{\snd}\rmd t.
\end{align}

By comparing $\bar{w}^{\fst}(\infty)$ and $\bar{w}^{\snd}(\infty)$, we discuss which first-price or second-price auction is preferable for bidders and sellers in the long run of time-varying auctions.
\begin{itemize}
    \item When $\bar{w}^{\fst}(\infty)>\bar{w}^{\snd}(\infty)$, revenue equivalence is broken, and the bidders prefer to participate in first-price auctions. Since the bidders' payoffs and the seller's revenue are zero-sum, the seller prefers to hold second-price auctions.
    \item When $\bar{w}^{\fst}(\infty)<\bar{w}^{\snd}(\infty)$, revenue equivalence is broken, but the preference of the bidders and the seller is reversed.
    \item When $\bar{w}^{\fst}(\infty)=\bar{w}^{\snd}(\infty)$, revenue equivalence holds. Neither the bidders nor the seller has any preference for the auction type. This means that the classical revenue equivalence can be extended to time-varying auctions.
\end{itemize}

\section{Theoretical Results}
Let us capture theoretical insight into the learning dynamics (Eq.~\eqref{dotx}). To this end, we focus on a class of value distributions, uniform distributions.

\begin{definition}[Uniform distribution]
Uniform distributions assume the parameters $\bs{\theta}^{*}(t)=(v_{\m}(t),v_{\M}(t))$ and are defined as
\begin{align}
    f(v;\bs{\theta}^{*}(t))=\frac{1}{v_{\M}(t)-v_{\m}(t)}\mathds{1}[v_{\m}(t)\le v\le v_{\M}(t)].
\end{align}
We also define the estimated parameters as $\bs{\theta}(t)=(x(t),y(t))$.
\end{definition}

Under this uniform distribution, each bidder observes a random value from $v_{\m}(t)$ to $v_{\M}(t)$. For convenience, we define the interval of the value distribution (simply, ``value interval'') as
\begin{align}
    \Delta v(t):=v_{\M}(t)-v_{\m}(t).
\end{align}

For uniform distribution, we obtain the bidding function as
\begin{align}
    b(v;\bs{\theta}(t))&=\underbrace{\frac{n-1}{n}}_{=:\alpha}(v-x(t))+x(t).
\end{align}
Here, note that $b(v;\bs{\theta}^{*}(t))\ge v_{\m}(t)$ holds for all $v_{\m}(t)\le v\le v_{\M}(t)$. Thus, $v_{\m}(t)$ is the basis in bidding (simply, ``basis value''). We also obtain the learning dynamics and the expected payoff as
\begin{align}
    \dot{x}(t)&=-\frac{x(t)-v_{\m}(t)}{n(n-1)\Delta v(t)},\quad \dot{y}(t)=0, \\
    w_{\bs{\theta}^{*}(t)}^{\fst}(\bs{\theta}(t))&=w_{\bs{\theta}^{*}(t)}^{\snd}-\frac{x(t)-v_{\m}(t)}{n^2},\quad w_{\bs{\theta}^{*}(t)}^{\snd}=\frac{\Delta v(t)}{n(n+1)},
\end{align}
(see Appendix~\ref{app_uniform} for the detailed calculations).

We now see that when the value distribution is time-invariant, i.e., $v_{\m}(t)=v_{\m}$ and $v_{\M}(t)=v_{\M}$, the gradient descent-ascent provides a reasonable performance. Obviously, $\dot{x}(t)<0$ holds for $x(t)>v_{\m}$, while $\dot{x}(t)>0$ for $x(t)<v_{\m}$. Hence, the convergence to the equilibrium ($\lim _{t\to \infty}x(t)=v_{\m}$) is guaranteed. The payoff in a first-price auction $w_{\bs{\theta}^{*}}^{\fst}(\bs{\theta}(t))$ also converges to that in a second-price one $w_{\bs{\theta}^{*}}^{\snd}$, immediately leading to $\bar{w}^{\fst}(\infty)=\bar{w}^{\snd}(\infty)$.

\subsection{Revenue Inequivalence}
This section discusses whether and how revenue equivalence is broken, based on the long-run payoff. First, when the basis value positively correlates with the value interval, bidders prefer first-price auctions in any time-varying value distributions (see Appendix~\ref{app_proof} for the full proof).

\begin{theorem}[Revenue inequivalence by positive correlation]
\label{thm_inequivalence_1st}
Suppose that the true parameter can take any $K$-states, i.e., $\bs{\theta}^{*}(t)\in\{\bs{\theta}^{(1)},\cdots,\bs{\theta}^{(K)}\}$ and that $v_{\m}^{(k)}<v_{\m}^{(k')} \Rightarrow \Delta v^{(k)}<\Delta v^{(k')}$ holds for all $k,k'\in\{1,\cdots,K\}$, then $\bar{w}^{\fst}(\infty)>\bar{w}^{\snd}(\infty)$ holds.
\end{theorem}

\textsc{Proof Sketch.} We assume time series of $x$ for $t\in[0,T]=:\Omega$ and discuss the time series of $\bs{\theta}^{*}$ which give the lower bound of $\bar{w}^{\fst}(\infty)-\bar{w}^{\snd}(\infty)$. We first divide the whole time $\Omega$ into $\Omega^{+}$ where $\dot{x}\ge 0$ holds, and $\Omega^{-}$ where $\dot{x}<0$ holds. Let $\tilde{k}(x)$ be $k$ such that $v_{\m}^{(k)}\le x< v_{\m}^{(k+1)}$ holds. We show that under $\Omega^{+}$, $w_{\bs{\theta}^{*}(t)}^{\fst}(\bs{\theta})-w_{\bs{\theta}^{*}(t)}^{\snd}$ takes minimum when $k=\tilde{k}(x)+1$, while under $\Omega^{-}$, it does when $k=\tilde{k}(x)$. Thus, in the long run, the lower bound of $\bar{w}^{\fst}(\infty)-\bar{w}^{\snd}(\infty)$ is given by the product of the path length of $x$ and the minimum of $\Delta v^{(k+1)}-\Delta v^{(k)}(>0)$ for all $k$. This means $\bar{w}^{\fst}(\infty)>\bar{w}^{\snd}(\infty)$.\qed \\

% Figure 04
\begin{figure*}[tb]
    \centering
    \includegraphics[width=0.8\hsize]{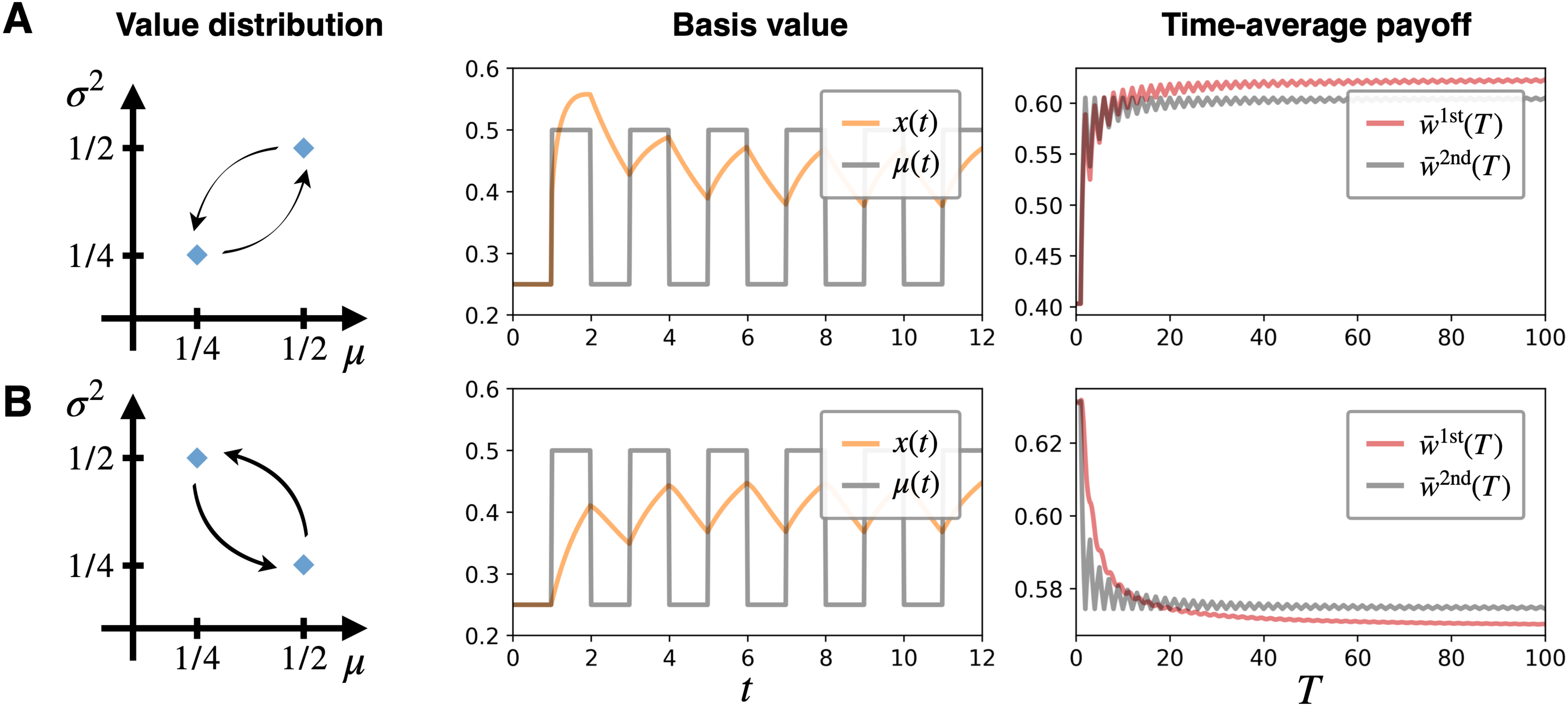}
    \caption{The experiments for periodic log-normal distributions. The left panels show how the log-normal distribution switches between two states. The meaning of the center and right panels is the same as Fig.~\ref{F02}. To simulate the dynamics of Eq.~\eqref{dotx}, we numerically calculate all the integrals by discretizing the space of $0\le v\le 20$ with $400$ meshes. The methods and parameters for all the experiments, too. The case of A considers $\bs{\theta}^{*}(t)\in\{(1/4,1/4), (1/2,1/2)\}$, B does $\bs{\theta}^{*}(t)\in\{(1/4,1/2),(1/2,1/4)\}$.}
    \label{F04}
\end{figure*}

On the other hand, when the basis value negatively correlates with the value interval, bidders prefer second-price auctions.

\begin{theorem}[Revenue inequivalence by negative correlation]
\label{thm_inequivalence_2nd}
Suppose that the true parameter can take any $K$-states, i.e., $\bs{\theta}^{*}(t)\in\{\bs{\theta}^{(1)},\cdots,\bs{\theta}^{(K)}\}$ and that $v_{\m}^{(k)}<v_{\m}^{(k')} \Rightarrow \Delta v^{(k)}>\Delta v^{(k')}$ holds for all $k,k'\in\{1,\cdots,K\}$, then $\bar{w}^{\fst}(\infty)<\bar{w}^{\snd}(\infty)$ holds.
\end{theorem}

\begin{proof}
This theorem is proved by reversing the direction of all the inequalities in the proof of Thm.~\ref{thm_inequivalence_1st}. Thus, $\bar{w}^{\fst}(\infty)-\bar{w}^{\snd}(\infty)$ is upper-bounded by the product of the path length of $x$ and the maximum of $\Delta v^{(k+1)}-\Delta v^{(k)}(<0)$ for all $k$. This means $\bar{w}^{\fst}(\infty)<\bar{w}^{\snd}(\infty)$.
\end{proof}

Finally, when the basis value does not correlate with the value interval, the revenue equivalence theorem is maintained.

\begin{theorem}[Revenue equivalence by no correlation]
\label{thm_equivalence}
Suppose that $\Delta v(t)$ is constant for $t\in[0,T]$, then $\bar{w}^{\fst}(\infty)=\bar{w}^{\snd}(\infty)$ holds.
\end{theorem}
\begin{proof}
We denote the interval of the value distribution as $\Delta v(t)=\Delta v$ for all $t$, and the difference of the payoffs is evaluated as
\begin{align}
    \bar{w}^{\fst}(\infty)-\bar{w}^{\snd}(\infty)&=\alpha\lim_{T\to\infty}\frac{1}{T}\int_{0}^{T}\Delta v(t) \dot{x}(t)\rmd t
    \nonumber\\
    &=\alpha\Delta v \lim_{T\to\infty}\frac{1}{T}\int_{0}^{T}\dot{x}(t)\rmd t
    \nonumber\\
    &=0,
\end{align}
which immediately leads to long-run revenue equivalence, i.e., $\bar{w}^{\fst}(\infty)=\bar{w}^{\snd}(\infty)$.
\end{proof}

\subsection{Key Insight from Theorems}
Let us capture a key insight of why the positive correlation between the basis value $v_{\m}(t)$ and the value interval $\Delta v(t)$ leads to revenue inequivalence $\bar{w}^{\fst}(\infty)>\bar{w}^{\snd}(\infty)$. A key equation is as follows.
\begin{align}
    w_{\bs{\theta}^{*}(t)}^{\fst}(\bs{\theta}(t))-w_{\bs{\theta}^{*}(t)}^{\snd}\propto -(x(t)-v_{\m}(t))\propto\Delta v(t)\dot{x}(t).
\end{align}
The first proportional symbol shows that the difference of payoffs between first-price and second-price auctions is proportional to the non-equilibrium degree $x(t)-v_{\m}(t)$. The second proportional symbol also shows that the learning speed $\dot{x}(t)$ is proportional to the non-equilibrium degree $x(t)-v_{\m}(t)$, but it is inversely proportional to the value interval $\Delta v(t)$. Here, the latter is because it is difficult to learn bidding when there is dispersion in the observed prices. Furthermore, because the estimation $x(t)$ tracks the basis value $v_{\m}(t)$, $\dot{x}(t)$ tends to be positive when $v_{\m}(t)$ is large, while $\dot{x}(t)$ tends to be negative when $v_{\m}(t)$ is small. Hence, when the basis value positively correlates with the value interval, $\bar{w}^{\fst}(\infty)>\bar{w}^{\snd}(\infty)$ holds.

This key insight is applicable to situations beyond what the theorems suppose. It has been explained by the correlation between the basis value and the value interval, which are the parameters in uniform distributions. However, we can find the parameters that roughly correspond to the basis value and the value interval in other value distributions. This is because the bidding $b(v;\bs{\theta})$ is continuous for the cumulative distribution $F$ and thus is not so sensitive to the details of the value distribution $f$. For example, consider log-normal distributions, which are frequently applied in auctions~\cite{edelman2006optimal, xiao2009optimal, ostrovsky2011reserve}.

\begin{definition}[Log-normal distribution]
Log-normal distribution is defined by $\bs{\theta}^{*}(t)=(\mu(t),\sigma(t)^{2})$ as
\begin{align}
    f(v;\bs{\theta}^{*}(t))=\frac{1}{\sqrt{2\pi\sigma(t)^{2}}v}\exp\left(-\frac{(\log v-\mu(t))^{2}}{2\sigma(t)^{2}}\right).
\end{align}
\end{definition}

In this log-normal distribution, its mean value $\mu(t)$ roughly corresponds to the basis value because as it increases, the whole value distribution shifts to the right. On the other hand, its variance $\sigma(t)^{2}$ roughly corresponds to the value interval because as it increases, the range of possible values expands. Later, we will numerically demonstrate that the positive correlation between $\mu(t)$ and $\sigma(t)^{2}$ results in $\bar{w}^{\fst}(\infty)>\bar{w}^{\snd}(\infty)$ (Thm.~\ref{thm_inequivalence_1st}), while the negative correlation results in $\bar{w}^{\fst}(\infty)<\bar{w}^{\snd}(\infty)$ (Thm.~\ref{thm_inequivalence_2nd}).

\section{Experiment}
We now see the breaking of revenue equivalence by numerical calculation for various possible situations. First, Fig.~\ref{F02} considers that a uniform value distribution varies periodically. Next, Fig.~\ref{F03} considers that a uniform value distribution varies randomly, following a Langevin equation. Last, Fig.~\ref{F04} considers that a log-normal distribution varies periodically.

\subsection{Periodic Environment}
Fig.~\ref{F02}-A, B, and C consider the positive, no, and negative correlation between the basis value $v_{\m}(t)$ and the value interval $\Delta v(t)$, respectively. In all the cases, we see that the learned basis value $x$ in first-price auctions successfully tracks the true one $v_{\m}(t)$. However, the paths of $x$ differ among these cases depending on how $\Delta v(t)$ is paired with $v_{\m}(t)$. In the case of A, because $x$ is smaller than $v_{\m}(t)$ in average, revenue inequivalence $\bar{w}^{\fst}(T)>\bar{w}^{\snd}(T)$ holds as time $T$ passes (showing Thm.~\ref{thm_inequivalence_1st}). In B, revenue equivalence $\bar{w}^{\fst}(T)=\bar{w}^{\snd}(T)$ holds (showing Thm.~\ref{thm_equivalence}). In C, revenue inequivalence $\bar{w}^{\fst}(T)<\bar{w}^{\snd}(T)$ holds in the reverse inequality (showing Thm.~\ref{thm_inequivalence_2nd}).

%%%%% Figure 02 %%%%%

\subsection{Random Environment}
We now consider a situation where the true parameters in uniform distributions $(v_{\m},v_{\M})$ vary randomly following the Langevin equations;
\begin{align}
    \dot{v}_{\m}(t)&=-(v_{\m}(t)-\bar{v}_{\m})+a_{\m}\eta(t),
    \label{Langevin_m}\\
    \dot{v}_{\M}(t)&=-(v_{\M}(t)-\bar{v}_{\M})+a_{\M}\eta(t).
    \label{Langevin_M}
\end{align}
Here, note that both $\dot{v}_{\m}$ and $\dot{v}_{\M}$ has the same noise term ($\left<\eta(t)\eta(t')\right>=\delta(t-t')$) but with different intensities ($a_{\m}>0$ and $a_{\M}>0$). The first terms in Eqs.~\eqref{Langevin_m} and~\eqref{Langevin_M} represent the restoring force to the original point of $(\bar{v}_{\m},\bar{v}_{\M})$.

Fig.~\ref{F03}-A, B, and C consider the positive ($a^{\m}<a^{\M}$), no ($a^{\m}=a^{\M}$), and negative ($a^{\m}>a^{\M}$) correlation between the basis value $v_{\m}$ and the value interval $\Delta v$. We see that the correlation between $v_{\m}$ and $\Delta v$ triggers the breaking of revenue inequivalence; $\bar{w}^{\fst}(T)>\bar{w}^{\snd}(T)$ holds as time $T$ passes in A, $\bar{w}^{\fst}(T)=\bar{w}^{\snd}(T)$ holds in B, and $\bar{w}^{\fst}(T)<\bar{w}^{\snd}(T)$ holds in C.

%%%%% Figure 03 %%%%%

\subsection{Log-Normal Distribution}
To more broadly confirm the fact that the correlation between the basis value and the value distribution leads to revenue inequivalence, let us consider another class of value distributions, the log-normal one. In a log-normal distribution, its mean value $\mu$ roughly corresponds to the basis value, while its variance $\sigma^{2}$ to the value interval.

Fig.~\ref{F04}-A shows the positive correlation between $\mu$ and $\sigma^{2}$ results in revenue inequivalence $\bar{w}^{\fst}(T)>\bar{w}^{\snd}(T)$ as time $T$ passes. On the other hand, Fig.~\ref{F04}-B shows that the negative correlation between $\mu$ and $\sigma^{2}$ results in the reversed inequality $\bar{w}^{\fst}(T)<\bar{w}^{\snd}(T)$. These mean that the theoretical insight from Thms.~\ref{thm_inequivalence_1st} and~\ref{thm_inequivalence_2nd} is robust to the difference in the shape of value distributions.

%%%%% Figure 04 %%%%%

\section{Conclusion}
This study extended the classical setting of fixed value distribution to a time-varying distribution. To discuss whether and how revenue equivalence is broken by time-varying value distribution, we formulate and compute the long-run bidding behavior, which differs between first-price and second-price auctions. By focusing on the class of uniform distributions, we obtained the theorems that support a key insight that the correlation between the basis value and the value interval triggers the breaking of revenue equivalence. Our experiments demonstrate that this insight is broadly applied to possible situations, such as periodic or random changes in uniform and log-normal distributions.

To pursue the principle of revenue inequivalence, this study introduced several idealizations. One future direction is to bring our setting closer to real-world auctions. First, it would be meaningful to analyze discrete-time dynamics and evaluate the effect of a finite learning rate on revenue equivalence. Also, it would be challenging to combine our time-varying setting with asymmetric or interdependent value distribution. Based on empirical auction data, we might specify how value distribution varies over time and how it affects revenue. The finding of this paper, i.e., the breaking of revenue equivalence due to time-varyingness, will provide a theoretical basis for this future direction.

%\bibliographystyle{ACM-Reference-Format}
%\balance
%\bibliography{aamas2026_Fujimoto}
%%% -*-BibTeX-*-
%%% Do NOT edit. File created by BibTeX with style
%%% ACM-Reference-Format-Journals [18-Jan-2012].

\clearpage

\appendix
\onecolumn

\renewcommand{\theequation}{A\arabic{equation}}
\setcounter{equation}{0}
\renewcommand{\figurename}{Figure A}
\setcounter{figure}{0}
\setcounter{secnumdepth}{1}

\begin{center}
{\LARGE\bf Appendix}
\end{center}

\section{Review: Revenue Equivalence Theorem} \label{app_review}
We consider that all the bidders use the same bidding strategy $b(v)$, which is a monotonically increasing function. Here, imagine that one bidder deviates from them and bids $b(v')$ when the bidder is offered the item of value $v$. In other words, when the item has the value of $v$, the bidder behaves as if the item has the value of $v'$. We denote the payoff of the bidder as $u(v',v)$, which is formulated as
\begin{align}
    u(v',v)=vF(v')^{n-1}-p(v').
\end{align}
Here, we defined the expected payment as $P(v')$. Now, the gradient of $u(v',v)$ for $v'$ is
\begin{align}
    \frac{\partial u(v',v)}{\partial v'}=v(n-1)f(v')F(v')^{n-2}-p(v').
\end{align}
If $b(v)$ is the optimal strategy, $u(v',v)$ satisfies the extreme condition in $v'=v$ as
\begin{align}
    &\left.\frac{\partial u(v',v)}{\partial v'}\right|_{v'=v}=0 \\
    &\Leftrightarrow v(n-1)f(v)F(v)^{n-2}-p(v)=0 \\
    &\Leftrightarrow p(v)=v(n-1)f(v)F(v)^{n-2}.
\end{align}
Considering $P(0)=0$, we obtain
\begin{align}
    P(v)&=P(0)+\int_{0}^{v}p(v')\rmd v' \\
    &=\int_{0}^{v}v'(n-1)f(v')F(v')^{n-2}\rmd v' \\
    &=vF(v)^{n-1}-\int_{0}^{v}F(v')^{n-1}\rmd v'.
\end{align}
Here, recall that the bidder offered the item of value $v$ wins with the probability of $F(v)^{n-1}$. Under the condition that the bidder wins, the conditional expected payment is equal to the equilibrium bidding in the first-price auction $b^{*}(v)=P(v)/F(v)^{n-1}$ as
\begin{align}
    b^{*}(v)=v-\frac{1}{F(v)^{n-1}}\int_{0}^{v}F(v')^{n-1}\rmd v'.
\end{align}

\section{Detailed Calculation for General Distribution} \label{app_general}
In this section, we omit the dependence on $t$ and denote $\bs{\theta}(t)$ and $\bs{\theta}^{*}(t)$ as $\bs{\theta}$ and $\bs{\theta}^{*}$ for simplicity. We begin with the definition of the bidding function as
\begin{align}
    b(v;\bs{\theta})=v-\frac{1}{F(v;\bs{\theta})^{n-1}}\int_{0}^{v}F(z;\bs{\theta})^{n-1}\rmd z.
\end{align}

Suppose that a focal bidder uses $b(v;\bs{\theta}')$ when the others use $b(v;\bs{\theta})$. We define $v'=v'(v,\bs{\theta},\bs{\theta}')$ such that $b(v';\bs{\theta})=b(v;\bs{\theta}')$. This means that when the focal bidder observes the item value $v$ with parameter $\bs{\theta}'$, it bids as if it observes $v'$ with parameter $\bs{\theta}$. Now the focal bidder's payoff is described as
\begin{align}
    w_{\bs{\theta}^{*}}(\bs{\theta}',\bs{\theta})=\int_{0}^{\infty}(v-b(v;\bs{\theta}'))f(v;\bs{\theta}^{*})F(v';\bs{\theta}^{*})^{n-1}\rmd v,
\end{align}
where $(v-b(v;\bs{\theta}'))$ is the difference between the item value $v$ and the payment $b(v;\bs{\theta}')$. $f(v;\bs{\theta}^{*})$ is the probability that to the focal bidder observes value $v$. $F(v';\bs{\theta}^{*})^{n-1}$ is the probability that all the others observe lower values than $v'$.

By definition, $v'(v,\bs{\theta},\bs{\theta})=v$ holds. Furthermore, the relation between $\bs{\theta}'$ and $v'(v,\bs{\theta},\bs{\theta}')$ around $\bs{\theta}'=\bs{\theta}$ is obtained by the chain rule as
\begin{align}
    \frac{\partial b(v;\bs{\theta})}{\partial v}\left.\frac{\partial v'}{\partial\bs{\theta}'}\right|_{\bs{\theta}'=\bs{\theta}}=\frac{\partial b(v;\bs{\theta})}{\partial\bs{\theta}}.
\end{align}

Now, the payoff gradient for the parameters is calculated as
\begin{align}
    \left.\frac{\partial w_{\bs{\theta}^{*}}(\bs{\theta}',\bs{\theta})}{\partial \bs{\theta}'}\right|_{\bs{\theta}'=\bs{\theta}}=& \int_{0}^{\infty}-\frac{\partial b(v;\bs{\theta})}{\partial\bs{\theta}}f(v;\bs{\theta}^{*})F(v;\bs{\theta}^{*})^{n-1}+(v-b(v;\bs{\theta}))f(v;\bs{\theta}^{*})\frac{\partial F(v;\bs{\theta}^{*})^{n-1}}{\partial v}\left.\frac{\partial v'}{\partial \bs{\theta}'}\right|_{\bs{\theta}'=\bs{\theta}}\rmd v \\
    =&\ -\int_{0}^{\infty}\left(\frac{f(v;\bs{\theta})}{F(v;\bs{\theta})}\right)^{-1}\left(\frac{f(v;\bs{\theta})}{F(v;\bs{\theta})}-\frac{f(v;\bs{\theta}^{*})}{F(v;\bs{\theta}^{*})}\right)f(v;\bs{\theta}^{*})F(v;\bs{\theta}^{*})^{n-1}\frac{\partial b(v;\bs{\theta})}{\partial\bs{\theta}}\rmd v.
\end{align}
In the final line, we used
\begin{align}
    \left.\frac{\partial v'}{\partial\bs{\theta}'}\right|_{\bs{\theta}'=\bs{\theta}}&=\left(\frac{\partial b(v;\bs{\theta})}{\partial v}\right)^{-1}\frac{\partial b(v;\bs{\theta})}{\partial\bs{\theta}} \\
    &=\left((n-1)\frac{f(v;\bs{\theta})}{F(v;\bs{\theta})}(v-b(v;\bs{\theta}))\right)^{-1}\frac{\partial b(v;\bs{\theta})}{\partial\bs{\theta}}, \\
    \frac{\partial b(v;\bs{\theta})}{\partial v}&=1+(n-1)\frac{f(v;\bs{\theta})}{F(v;\bs{\theta})}\frac{1}{F(v;\bs{\theta})^{n-1}}\int_{0}^{v}F(z;\bs{\theta})^{n-1}\rmd z-1 \\
    &=(n-1)\frac{f(v;\bs{\theta})}{F(v;\bs{\theta})}(v-b(v;\bs{\theta})).
\end{align}

We describe the expected payoff as
\begin{align}
    w^{\fst}_{\bs{\theta}^{*}}(\bs{\theta})=w_{\bs{\theta}^{*}}(\bs{\theta},\bs{\theta})&=\int_{0}^{\infty}(v-b(v;\bs{\theta}))f(v;\bs{\theta}^{*})F(v;\bs{\theta}^{*})^{n-1}\rmd v, \\
    w^{\snd}_{\bs{\theta}^{*}}=w_{\bs{\theta}^{*}}^{\fst}(\bs{\theta}^{*})&=\int_{0}^{\infty}(v-b(v;\bs{\theta}^{*}))f(v;\bs{\theta}^{*})F(v;\bs{\theta}^{*})^{n-1}\rmd v \\
    &=\int_{0}^{\infty}\int_{0}^{v}F(v';\bs{\theta}^{*})^{n-1}\rmd v'f(v;\bs{\theta}^{*})\rmd v \\
    &=\left[F(v;\bs{\theta}^{*})\int_{0}^{v}F(v';\bs{\theta}^{*})^{n-1}\rmd v'\right]_{0}^{\infty}-\int_{0}^{\infty}F(v;\bs{\theta}^{*})^{n}\rmd v \\
    &=\int_{0}^{\infty}F(v;\bs{\theta}^{*})^{n-1}\rmd v-\int_{0}^{\infty}F(v;\bs{\theta}^{*})^{n}\rmd v. \\
\end{align}

\section{Detailed Calculation for Uniform Distribution} \label{app_uniform}
In this section, we omit the dependence on $t$ and denote $\bs{\theta}(t)=(x(t),y(t))$ and $\bs{\theta}^{*}(t)=(v_{\m}(t),v_{\M}(t))$ as $\bs{\theta}=(x,y)$ and $\bs{\theta}^{*}=(v_{\m},v_{\M})$ for simplicity. Suppose the uniform distribution of $\bs{\theta}^{*}=(v_{\m},v_{\M})$, and then the equilibrium bidding is calculated as
\begin{align}
    b^{*}(v)&=v-\frac{1}{F(v;\bs{\theta}^{*})^{n-1}}\int_{0}^{v}F(v';\bs{\theta}^{*})^{n-1}\rmd v' \\
    &=v-\left(\frac{v-v_{\m}}{v_{\M}-v_{\m}}\right)^{-(n-1)}\int_{v_{\m}}^{v}\left(\frac{v'-v_{\m}}{v_{\M}-v_{\m}}\right)^{n-1}\rmd v' \\
    &=v-\frac{1}{n}(v-v_{\m}) \\
    &=v_{\m}+\frac{n-1}{n}(v-v_{\m}).
\end{align}
Replacing $\bs{\theta}^{*}$ with $\bs{\theta}=(x,y)$, we obtain the bidding function as
\begin{align}
    b(v;\bs{\theta})=x+\frac{n-1}{n}(v-x).
\end{align}

The gradient of the payoff is calculated as
\begin{align}
    \left.\frac{\partial w_{\bs{\theta}^{*}}(\bs{\theta}',\bs{\theta})}{\partial \bs{\theta}'}\right|_{\bs{\theta}'=\bs{\theta}}&=-\int_{0}^{\infty}\left(\frac{f(v;\bs{\theta})}{F(v;\bs{\theta})}\right)^{-1}\left(\frac{f(v;\bs{\theta})}{F(v;\bs{\theta})}-\frac{f(v;\bs{\theta}^{*})}{F(v;\bs{\theta}^{*})}\right)f(v;\bs{\theta}^{*})F(v;\bs{\theta}^{*})^{n-1}\frac{\partial b(v;\bs{\theta})}{\partial\bs{\theta}}\rmd v \\
    &=-\int_{v_{\m}}^{v_{\M}}(v-x)\left(\frac{1}{v-x}-\frac{1}{v-v_{\m}}\right)\frac{1}{\Delta v}\left(\frac{v-v_{\m}}{\Delta v}\right)^{n-1}\frac{\partial b(v;\bs{\theta})}{\partial\bs{\theta}}\rmd v \\
    &=-\frac{x-v_{\m}}{(\Delta v)^{n}}\int_{v_{\m}}^{v_{\M}}(v-v_{\m})^{n-2}\frac{\partial b(v;\bs{\theta})}{\partial\bs{\theta}}\rmd v.
\end{align}
We further apply
\begin{align}
    \frac{\partial b(v;\bs{\theta})}{\partial x}=\frac{1}{n},\quad \frac{\partial b(v;\bs{\theta})}{\partial y}=0,
\end{align}
and obtain
\begin{align}
    \dot{x}&=\left.\frac{\partial w_{\bs{\theta}^{*}}(\bs{\theta}',\bs{\theta})}{\partial x'}\right|_{\bs{\theta}'=\bs{\theta}} \\
    &=-\frac{x-v_{\m}}{(\Delta v)^{n}}\int_{v_{\m}}^{v_{\M}}(v-v_{\m})^{n-2}\frac{\partial b(v;\bs{\theta})}{\partial x}\rmd v \\
    &=-\frac{1}{n}\frac{x-v_{\m}}{(\Delta v)^{n}}\int_{v_{\m}}^{v_{\M}}(v-v_{\m})^{n-2}\rmd v \\
    &=-\frac{x-v_{\m}}{n(n-1)\Delta v}, \\
    \dot{y}&=\left.\frac{\partial w_{\bs{\theta}^{*}}(\bs{\theta}',\bs{\theta})}{\partial y'}\right|_{\bs{\theta}'=\bs{\theta}} \\
    &=-\frac{x-v_{\m}}{(\Delta v)^{n}}\int_{v_{\m}}^{v_{\M}}(v-v_{\m})^{n-2}\frac{\partial b(v;\bs{\theta})}{\partial y}\rmd v \\
    &=0.
\end{align}

Finally, the payoffs in first-price and second-price auctions are
\begin{align}
    w^{\fst}_{\bs{\theta}^{*}}(\bs{\theta})=w_{\bs{\theta}^{*}}(\bs{\theta},\bs{\theta})&=\int_{0}^{\infty}(v-b(v;\bs{\theta}))f(v;\bs{\theta}^{*})F(v;\bs{\theta}^{*})^{n-1}\rmd v \\
    &=\int_{v_{\m}}^{v_{\M}}\frac{v-x}{n}\frac{1}{\Delta v}\left(\frac{v-v_{\m}}{\Delta v}\right)^{n-1}\rmd v \\
    &=\frac{1}{n(\Delta v)^{n}}\int_{v_{\m}}^{v_{\M}}(v-v_{\m})^{n}-(x-v_{\m})(v-v_{\m})^{n-1}\rmd v \\
    &=\frac{\Delta v}{n(n+1)}-\frac{x-v_{\m}}{n^{2}}, \\
    w^{\snd}_{\bs{\theta}^{*}}=w^{\fst}_{\bs{\theta}^{*}}(\bs{\theta}^{*})&=\frac{\Delta v}{n(n+1)}.
\end{align}

\section{Proof of Thm.~\ref{thm_inequivalence_1st}} \label{app_proof}
Without loss of generality, we can take $v_{\m}^{(1)}\le\cdots\le v_{\m}^{(K)}$, and it holds $v_{\m}^{(k)}<v_{\m}^{(k+1)}\Rightarrow \Delta v^{(k)}<\Delta v^{(k+1)}$ for all $k=1,\cdots,K-1$. For convenience, we omit below the description of the dependence on time $t$. Let $\Omega=[0,T]$ denote the whole interval of integration. Furthermore, we divide $\Omega$ into two subsets: $\Omega^{+}$ such that $\dot{x}\ge 0$ for $t\in\Omega^{+}$ and $\Omega^{-}$ such that $\dot{x}<0$ for $t\in\Omega^{-}$. Here, $\Omega=\Omega^{+}\cup\Omega^{-}$ trivially holds.

We also define $\tilde{k}(x)$ such that $v_{\m}^{(\tilde{k}(x))}\le x\le v_{\m}^{(\tilde{k}(x)+1)}$. For $t\in\Omega^{+}$, we obtain
\begin{align}
    \Delta v\dot{x}\rmd t\ge \Delta v^{(\tilde{k}(x)+1)}\rmd x,
    \label{ascent_distance}
\end{align}
meaning that when $x$ increases (i.e., $\dot{x}\ge 0\Leftrightarrow \rmd x\ge 0$), $\Delta v\dot{x}\rmd t$ is bounded from below in minimum $\Delta v$ such that $\dot{x}\propto -(x-v_{\m})>0\Leftrightarrow x<v_{\m}$. On the other hand, for $t\in\Omega^{-}$, we obtain
\begin{align}
    \Delta v\dot{x}\rmd t\ge \Delta v^{(\tilde{k}(x))}\rmd x,
    \label{descent_distance}
\end{align}
meaning that when $x$ decreases (i.e., $\dot{x}<0\Leftrightarrow \rmd x<0$), $\Delta v\dot{x}\rmd t$ is bounded from below in maximum $\Delta v$ such that $\dot{x}\propto -(x-v_{\m})\le 0\Leftrightarrow x\ge v_{\m}$.

Now, we obtain
\begin{align}
    \int_{\Omega}\Delta v(t)\dot{x}(t)\rmd t&=\int_{\Omega^{+}}\Delta v(t)\dot{x}(t)\rmd t+\int_{\Omega^{-}}\Delta v(t)\dot{x}(t)\rmd t
    \label{dvx_lbound_1}\\
    &\ge\int_{x(\Omega^{+})}\Delta v^{(\tilde{k}(x)+1)}\rmd x+\int_{x(\Omega^{-})}\Delta v^{(\tilde{k}(x))}\rmd x
    \label{dvx_lbound_2}\\
    &=\frac{1}{2}\left(\int_{x(\Omega^{+})}\Delta v^{(\tilde{k}(x)+1)}\rmd x-\int_{x(\Omega^{-})}\Delta v^{(\tilde{k}(x)+1)}\rmd x\right)+\frac{1}{2}\left(\int_{x(\Omega^{-})}\Delta v^{(\tilde{k}(x))}\rmd x-\int_{x(\Omega^{+})}\Delta v^{(\tilde{k}(x))}\rmd x\right)+O(1)
    \label{dvx_lbound_3}\\
    &=\frac{1}{2}\left(\int_{x(\Omega^{+})}(\Delta v^{(\tilde{k}(x)+1)}-\Delta v^{(\tilde{k}(x))})\rmd x-\int_{x(\Omega^{-})}(\Delta v^{(\tilde{k}(x)+1)}-\Delta v^{(\tilde{k}(x))})\rmd x\right)+O(1) \\
    &=\frac{1}{2}\left(\int_{x(\Omega^{+})}(\Delta v^{(\tilde{k}(x)+1)}-\Delta v^{(\tilde{k}(x))})|\rmd x|+\int_{x(\Omega^{-})}(\Delta v^{(\tilde{k}(x)+1)}-\Delta v^{(\tilde{k}(x))})|\rmd x|\right)+O(1) \\
    &\ge\frac{1}{2}\left(\min_{k}(\Delta v^{(k+1)}-\Delta v^{(k)})\right)\left(\int_{x(\Omega^{+})}|\rmd x|+\int_{x(\Omega^{-})}|\rmd x|\right)+O(1) \\
    &=\frac{1}{2}\left(\min_{k}(\Delta v^{(k+1)}-\Delta v^{(k)})\right)\int_{x(\Omega)}|\rmd x|+O(1).
    \label{dvx_lbound}
\end{align}
In Eq.~\eqref{dvx_lbound_1}, we divided the interval of integration, i.e., $\Omega=\Omega^{+}\cup\Omega^{-}$. In Eq.~\eqref{dvx_lbound_2}, we used Eqs.~\eqref{ascent_distance} and~\eqref{descent_distance}. In Eq.~\eqref{dvx_lbound_3}, we used Lem.~\ref{lem_equivalence} twice; For $h(x)=\Delta v^{(\tilde{k}(x)+1)}$, the lemma shows
\begin{align}
    \int_{x(\Omega^{+})}\Delta v^{(\tilde{k}(x)+1)}\rmd x=\frac{1}{2}\left(\int_{x(\Omega^{+})}\Delta v^{(\tilde{k}(x)+1)}\rmd x-\int_{x(\Omega^{-})}\Delta v^{(\tilde{k}(x)+1)}\rmd x\right)+O(1),
\end{align}
while for $h(x)=\Delta v^{(\tilde{k}(x))}$, the lemma also shows
\begin{align}
    \int_{x(\Omega^{-})}\Delta v^{(\tilde{k}(x))}\rmd x=\frac{1}{2}\left(\int_{x(\Omega^{-})}\Delta v^{(\tilde{k}(x))}\rmd x-\int_{x(\Omega^{+})}\Delta v^{(\tilde{k}(x))}\rmd x\right)+O(1).
\end{align}
Here, we remark that $h(x)$ is trivially bounded for both $h(x)=\Delta v^{(\tilde{k}(x)+1)}, \Delta v^{(\tilde{k}(x))}$.

Finally, the payoffs of first-price ($\bar{w}^{\fst}(\infty)$) and second-price ($\bar{w}^{\snd}(\infty)$) auctions are evaluated as
\begin{align}
    \bar{w}^{\fst}(\infty)-\bar{w}^{\snd}(\infty)&=\alpha\lim_{T\to\infty}\frac{1}{T}\int_{0}^{T}\Delta v \dot{x}\rmd t \\
    &\ge\alpha\lim_{T\to\infty}\frac{1}{T}\left\{\frac{1}{2}\left(\min_{k}(\Delta v^{(k+1)}-\Delta v^{(k)})\right)\int_{x(\Omega)}|\rmd x|+O(1)\right\} \\
    &=\frac{1}{2}\alpha\left(\min_{k}(\Delta v^{(k+1)}-\Delta v^{(k)})\right)\underbrace{\lim_{T\to\infty}\frac{1}{T}\int_{x(\Omega)}|\rmd x|}_{\rm time-average\ path\ length}.
    \label{bw1-bw2_lbound}
\end{align}
In conclusion, we obtain $\bar{w}^{\fst}(\infty)>\bar{w}^{\snd}(\infty)$.

\begin{lemma}[Equivalence between total ascent and descent]
\label{lem_equivalence}
We assume that $h(x)>0$ is an arbitrary function, but its integral, i.e., $h(x)$, is bounded. It is satisfied that
\begin{align}
    \int_{x(\Omega^{+})}h(x)\rmd x=-\int_{x(\Omega^{-})}h(x)\rmd x+O(1).
    \label{equivalence_ascent_descent}
\end{align}
\end{lemma}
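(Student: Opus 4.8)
The plan is to recognize the left-hand side of Eq.~\eqref{equivalence_ascent_descent} as a telescoping total-derivative integral. Write $H$ for the antiderivative of $h$ (the ``integral of $h$'' appearing in the statement); by hypothesis $H$ is bounded along the trajectory, say $|H(x(t))|\le M$ for all $t$. Since $v_{\m},v_{\M}$ are piecewise continuous, $\dot{x}$ is piecewise continuous, so $\Omega^{+}$ and $\Omega^{-}$ are each finite unions of intervals on which $x(t)$ is non-decreasing and non-increasing respectively. On each such interval $[a,b]$ the substitution $x=x(t)$ gives $\int_{a}^{b}h(x(t))\dot{x}(t)\,\rmd t=\int_{x(a)}^{x(b)}h(x)\,\rmd x$; summing these contributions, counting each traversal with its multiplicity and the sign of $\rmd x$ (which is precisely how the path integrals $\int_{x(\Omega^{\pm})}$ are to be read, consistent with their later use with $|\rmd x|$), yields
\[
  \int_{x(\Omega^{+})}h(x)\,\rmd x+\int_{x(\Omega^{-})}h(x)\,\rmd x=\int_{0}^{T}h(x(t))\,\dot{x}(t)\,\rmd t.
\]

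Next I would observe that the right-hand integrand is a total time derivative, $h(x(t))\dot{x}(t)=\frac{\rmd}{\rmd t}H(x(t))$, so by the fundamental theorem of calculus the right-hand side equals $H(x(T))-H(x(0))$, whose absolute value is at most $2M$ independently of $T$, i.e.\ it is $O(1)$. Rearranging this identity gives $\int_{x(\Omega^{+})}h(x)\,\rmd x=-\int_{x(\Omega^{-})}h(x)\,\rmd x+O(1)$, which is exactly Eq.~\eqref{equivalence_ascent_descent}.

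The substance here is bookkeeping rather than hard analysis: the only point requiring care is that $\int_{x(\Omega^{+})}$ and $\int_{x(\Omega^{-})}$ must be treated as line integrals along the (possibly back-and-forth) orbit of $x$, so that the piecewise substitution above is literally an equality and the two pieces recombine into $\int_{0}^{T}h(x(t))\dot{x}(t)\,\rmd t$ with no leftover terms. A minor technical caveat is that $x(t)$ is only piecewise $C^{1}$, but $H\circ x$ remains absolutely continuous on any bounded $[0,T]$ (it is continuous and piecewise $C^{1}$ with finitely many break points), so the telescoping across the break points is still valid. Given the earlier fact $\dot{x}\propto-(x-v_{\m})$ and the boundedness of the relevant $H$, this lemma then plugs directly into Eqs.~\eqref{dvx_lbound_1}--\eqref{dvx_lbound} of the proof of Thm.~\ref{thm_inequivalence_1st}.
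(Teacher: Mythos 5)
Your proof is correct and follows essentially the same route as the paper's: both identify $\int_{x(\Omega^{+})}h\,\rmd x+\int_{x(\Omega^{-})}h\,\rmd x$ with the total line integral $\int_{x(\Omega)}h\,\rmd x=H(x(T))-H(x(0))=O(1)$ via the antiderivative $H$, and then rearrange. Your write-up merely makes explicit the piecewise substitution and the absolute-continuity caveat that the paper leaves implicit.
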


\begin{proof}
First, we obtain
\begin{align}
    \int_{x(\Omega)}h(x)\rmd x=\left[h(x)\right]_{x(\Omega)}=H(x(T))-H(x(0))=O(1).
\end{align}
Using this, we prove
\begin{align}
    \int_{x(\Omega^{+})}h(x)\rmd x=\int_{x(\Omega)}h(x)\rmd x-\int_{x(\Omega^{-})}h(x)\rmd x=-\int_{x(\Omega^{-})}h(x)\rmd x+O(1).
\end{align}
By these equations, we obtain Eq.~\eqref{equivalence_ascent_descent}.

This lemma is intuitively interpreted as follows. For the time series of $x$ that oscillates sufficiently many times, the total distance of the ascent of $h$ (i.e., the LHS in Eq.~\eqref{equivalence_ascent_descent}) is almost equal to that of the descent of $h$ (i.e., the RHS in Eq.~\eqref{equivalence_ascent_descent}).
\end{proof}

\section{Example of Two-State Transition} \label{app_example}
This section is dedicated to strictly deriving $\bar{w}^{\fst}(T)-\bar{w}^{\snd}(T)$ as an exercise to follow the proof of Thm.~\ref{thm_inequivalence_1st}. For convenience, we omit below the description of the dependence on time $t$ again.

Let $\Omega=[0,T]$ denote the whole interval of integration. We divide this interval by $0=t_{0}^{-}\le t_{1}^{+}<t_{1}^{-}<\cdots<t_{L}^{+}\le t_{L}^{-}=T$ without the loss of generality. Here, we consider that $x$ takes its local maximum at the times of $t_l^{+}$, while it takes its local minimum at the times of $t_l^{-}$. Then, we can divide $\Omega$ into two subsets, $\Omega^{+}$ and $\Omega^{-}$, as
\begin{align}
    \Omega^{+}:=\bigcup_{l=1}^{L}[t_{l-1}^{-},t_{l}^{+}],\quad \Omega^{-}:=\bigcup_{l=1}^{L}(t_{l}^{+},t_{l}^{-}).
\end{align}

Next, for $t\in\Omega^{+}$, we obtain
\begin{align}
    \Delta v\dot{x}\rmd t=\Delta v^{(2)}\rmd x,
\end{align}
which corresponds to Eq.~\eqref{ascent_distance}. Here, however, note that by a special property in $K=2$, the equality strictly holds; when $x$ increases, $\bs{\theta}^{*}=\bs{\theta}^{(2)}$ holds since $v_{\m}^{(1)}<v_{\m}^{(2)}$ and $\dot{x}\propto -(x-v_{\m})$. For $t\in\Omega^{-}$, we also obtain
\begin{align}
    \Delta v\dot{x}\rmd t=\Delta v^{(1)}\rmd x,
\end{align}
which corresponds to Eq.~\eqref{descent_distance}.

Furthermore, we explain the abused notation, $x(\Omega^{+})$ and $x(\Omega^{+})$. Let $x_{l}^{+}:=x(t_{l}^{+})$ be the local maximum values and $x_{l}^{-}:=x(t_{l}^{-})$ be the local minimum values, and we can use more tractable expressions;
\begin{align}
    \int_{x(\Omega^{+})}\rmd x=\sum_{l=1}^{L}(x_{l}^{+}-x_{l-1}^{-}), \quad\int_{x(\Omega^{-})}\rmd x=\sum_{l=1}^{L}(x_{l}^{-}-x_{l}^{+}).
\end{align}
We also show Lem.~\ref{lem_equivalence} in more tractable expressions;
\begin{align}
    \sum_{l=1}^{L}(x_{l}^{+}-x_{l-1}^{-})&=\frac{1}{2}\left(\sum_{l=1}^{L}(x_{l}^{+}-x_{l-1}^{-})-\sum_{l=1}^{L}(x_{l}^{-}-x_{l}^{+})\right)+\frac{1}{2}(x_{l}^{-}-x_{0}^{-}), \\
    \sum_{l=1}^{L}(x_{l}^{-}-x_{l}^{+})&=\frac{1}{2}\left(\sum_{l=1}^{L}(x_{l}^{-}-x_{l}^{+})-\sum_{l=1}^{L}(x_{l}^{+}-x_{l-1}^{-})\right)+\frac{1}{2}(x_{l}^{-}-x_{0}^{-}),
\end{align}
where the last term is obviously $O(1)$.

To summarize the above, we obtain
\begin{align}
    \int_{\Omega}\Delta v\dot{x}\rmd t&=\int_{\Omega^{+}}\Delta v\dot{x}\rmd t+\int_{\Omega^{-}}\Delta v\dot{x}\rmd t& \\
    &=\Delta v^{(2)}\sum_{l=1}^{L}(x_{l}^{+}-x_{l-1}^{-})+\Delta v^{(1)}\sum_{l=1}^{L}(x_{l}^{-}-x_{l}^{+}) \\
    &=\Delta v^{(2)}\frac{1}{2}\left(\sum_{l=1}^{L}(x_{l}^{+}-x_{l-1}^{-})-\sum_{l=1}^{L}(x_{l}^{-}-x_{l}^{+})\right)+\Delta v^{(1)}\frac{1}{2}\left(\sum_{l=1}^{L}(x_{l}^{-}-x_{l}^{+})-\sum_{l=1}^{L}(x_{l}^{+}-x_{l-1}^{-})\right)+O(1) \\
    &=(\Delta v^{(2)}-\Delta v^{(1)})\frac{1}{2}\left(\sum_{l=1}^{L}|x_{l}^{+}-x_{l-1}^{-}|+\sum_{l=1}^{L}|x_{l}^{-}-x_{l}^{+}|\right)+O(1) \\
    &=\frac{1}{2}(\Delta v^{(2)}-\Delta v^{(1)})\int_{x(\Omega)}|\rmd x|+O(1).
\end{align}

In conclusion, we evaluate the time-average payoff as
\begin{align}
    \bar{w}^{\fst}(\infty)-\bar{w}^{\snd}(\infty)&=\alpha\lim_{T\to\infty}\frac{1}{T}\int_{0}^{T}\Delta v\dot{x}\rmd t \\
    &=\frac{1}{2}\alpha(\Delta v^{(2)}-\Delta v^{(1)})\underbrace{\lim_{T\to\infty}\frac{1}{T}\int_{x(\Omega)}|\rmd x|}_{\rm time-average\ path\ length}.
\end{align}
Notably, the difference in the time-average payoffs between first-price and second-price auctions is strictly given, as different from the lower bound in Eq.~\eqref{bw1-bw2_lbound}.

\section{Example of Non-Monotonic Cases} \label{app_necessity}
We used the monotonic relationship between $v_{\m}$ and $\Delta v$ in Thm.~\ref{thm_inequivalence_1st} and~\ref{thm_inequivalence_2nd}. This monotonicity might look like a too strong assumption, but it is necessary to obtain the direction of revenue inequivalence. To show the necessity of monotonicity, this section introduces a counterexample where the direction of revenue inequivalence cannot be determined only by the value distribution. Fig.~A\ref{FA01} shows the experiments for a counterexample given by the coupling of two minimum values ($v_{\m}^{(-)}=10$, $v_{\m}^{(+)}=20$) and two value intervals ($\Delta v^{(-)}=10$, $\Delta v^{(+)}=20$). In other words, $\bs{\theta}^{*}\in\{\bs{\theta}^{(-,-)},\bs{\theta}^{(-,+)},\bs{\theta}^{(+,-)},\bs{\theta}^{(+,+)}\}$ where $\bs{\theta}^{(-,-)}=(v_{\m}^{(-)},v_{\m}^{(-)}+\Delta v^{(-)})=(10,20)$, $\bs{\theta}^{(-,+)}=(v_{\m}^{(-)},v_{\m}^{(-)}+\Delta v^{(+)})=(10,30)$, $\bs{\theta}^{(+,-)}=(v_{\m}^{(+)},v_{\m}^{(+)}+\Delta v^{(-)})=(20,30)$, and $\bs{\theta}^{(+,+)}=(v_{\m}^{(+)},v_{\m}^{(+)}+\Delta v^{(+)})=(20,40)$, which does not satisfy the monotonicity obviously. In this example, we consider two cases of cyclic transitions: the cycle $\bs{\theta}^{(-,-)}\to \bs{\theta}^{(-,+)}\to \bs{\theta}^{(+,+)}\to \bs{\theta}^{(+,-)}\to \bs{\theta}^{(-,-)}\to \cdots$ in Fig.~A\ref{FA01}-A, whereas its reversed cycle $\bs{\theta}^{(-,-)}\to \bs{\theta}^{(+,-)}\to \bs{\theta}^{(+,+)}\to \bs{\theta}^{(-,+)}\to \bs{\theta}^{(-,-)}\to \cdots$ in Fig.~A\ref{FA01}-B. Interestingly, the direction of revenue equivalence is different between these two cases, even though both the possible states and the staying time at each state are the same.

One of our key findings is that $\bar{w}^{\fst}(\infty)-\bar{w}^{\snd}(\infty)$ is bounded by the travel distance $\rmd x$ in a moment (see Eqs.~\eqref{ascent_distance} and~\eqref{descent_distance}). This $\rmd x$ obviously depends on the non-equilibrium of $x$, i.e., $x-v_{\m}$. In other words, the value of $\Delta v$ immediately after $v_{\m}$ switched, where $x$ is highly non-equilibrium, is dominant in determining the direction of revenue inequivalence. Now focus on Fig.~A\ref{FA01}-A, in which $\Delta v$ takes $\Delta v^{(-)}$ immediately after $v_{\m}$ switches to $v_{\m}^{(-)}$, while $\Delta v$ takes $\Delta v^{(+)}$ immediately after $v_{\m}$ switches to $v_{\m}^{(+)}$. Thus, this can be approximately regarded as the transition between $\bs{\theta}^{(-,-)}$ and $\bs{\theta}^{(+,+)}$. Fig.~A\ref{FA01}-A is similar to Fig.~\ref{F02}-A. Next focus on Fig.~A\ref{FA01}-B, in which $\Delta v$ takes $\Delta v^{(+)}$ immediately after $v_{\m}$ switches to $v_{\m}^{(-)}$, while $\Delta v$ takes $\Delta v^{(-)}$ immediately after $v_{\m}$ switches to $v_{\m}^{(+)}$. Thus, this can be approximately regarded as the transition between $\bs{\theta}^{(-,+)}$ and $\bs{\theta}^{(+,-)}$. Fig.~A\ref{FA01}-B is similar to Fig.~\ref{F02}-C.

% Figure A01
\begin{figure}[h!]
    \centering
    \includegraphics[width=0.8\hsize]{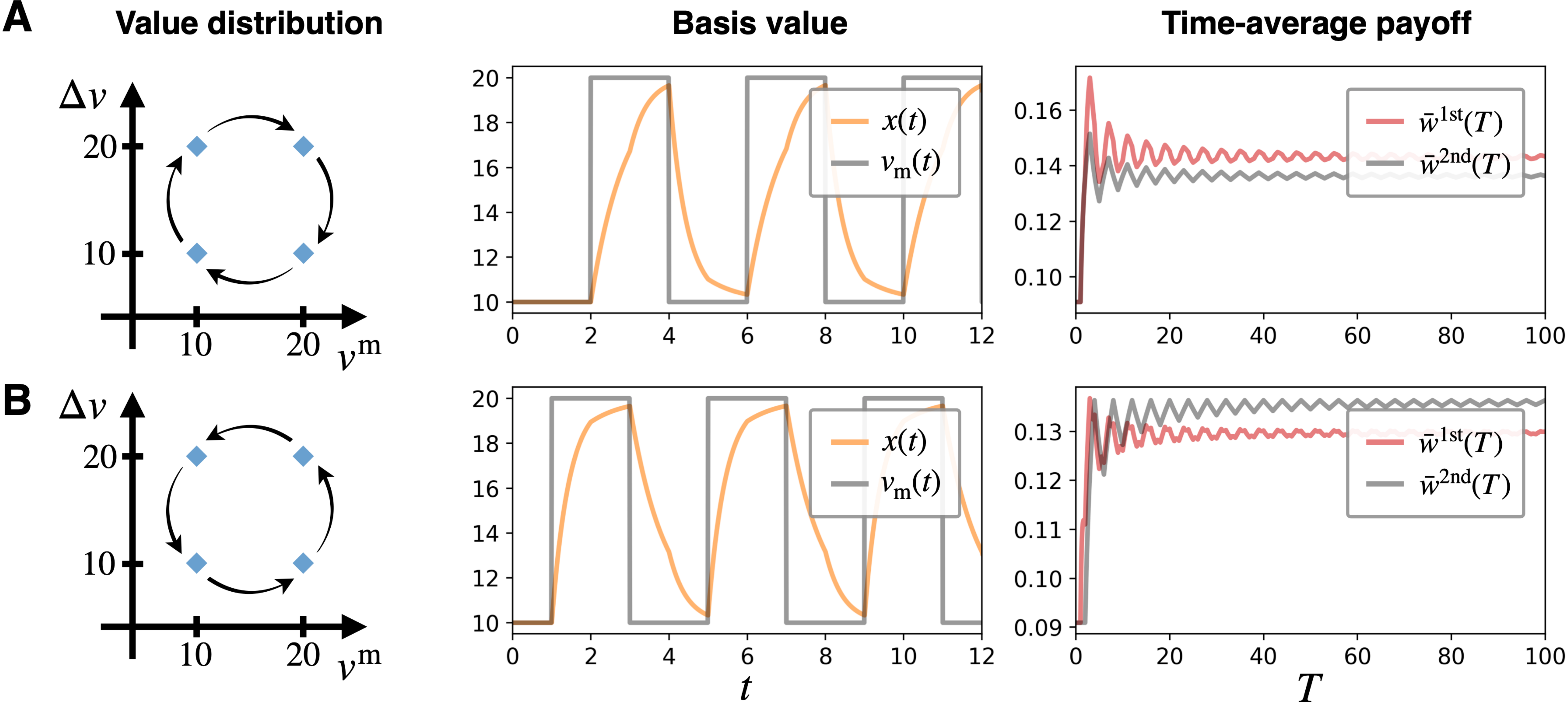}
    \caption{The experiments for a counterexample in which it cannot be determined whether bidders receive higher expected payoffs in the first-price auction than in the second-price auction. How to see each panel is the same as Fig.~\ref{F02}. The methods and parameters for all the experiments are the same as Fig.~\ref{F02}. Both (A) and (B) consider the same set of the value distributions $\bs{\theta}^{*}\in\{\bs{\theta}^{(-,-)},\bs{\theta}^{(-,+)},\bs{\theta}^{(+,-)},\bs{\theta}^{(+,+)}\}$, where we defined $\bs{\theta}^{(-,-)}:=(10,20)$, $\bs{\theta}^{(-,+)}:=(10,30)$, $\bs{\theta}^{(+,-)}:=(20,30)$, and $\bs{\theta}^{(+,+)}:=(20,40)$. (A). The cyclic transition of $\bs{\theta}^{(-,-)}\to \bs{\theta}^{(-,+)}\to \bs{\theta}^{(+,+)}\to \bs{\theta}^{(+,-)}\to \bs{\theta}^{(-,-)}\cdots$. (B). The reversed transition of $\bs{\theta}^{(-,-)}\to \bs{\theta}^{(+,-)}\to \bs{\theta}^{(+,+)}\to \bs{\theta}^{(-,+)}\to \bs{\theta}^{(-,-)}\cdots$.}
    \label{FA01}
\end{figure}

%%%%%%%%%%%%%%%%%%%%%%%%%%%%%%%%%%%%%%%%%%%%%%%%%%%%%%%%%%%%%%%%%%%%%%%%

\end{document}